\theoremstyle{definition}
\newtheorem{conjecture}{Conjecture}
\newtheorem{problem}{Problem}
\newtheorem{remark}{Remark}
\newtheorem{theorem}{Theorem}
\newtheorem{lemma}{Lemma}
\begin{document}

\title{Searching for Point Locations using Lines}
\author{Michelle Cordier \textsuperscript{1}, Meaghan Wheeler \textsuperscript{2}}

\thanks{\textsuperscript{1} Department of Mathematics, Chatham University, 1 Woodland Rd, Pittsburgh, PA 15232, U.S.A. E-mail: M.Doyle@chatham.edu}         

\thanks{\textsuperscript{2} University of Miami, 1320 S. Dixie Hwy, Coral Gables, FL 33146, U.S.A. E-mail: mmw134@miami.edu} 

\begin{abstract}
Versions of the following problem appear in several topics such as Gamma Knife radiosurgery, studying objects with the X-ray transform, the 3SUM problem, and the $k$-linear degeneracy testing.  Suppose there are $n$ points on a plane whose specific locations are unknown.  We are given all the lines that go through the points with a given slope.  We show that the minimum number of {\it slopes} needed, in general, to find all the point locations is $n+1$ and we provide an algorithm to do so.
\end{abstract}

\keywords{slopes, lines, point locations.  \newline {\it Subject Classification.} 52-02 Convex and Discrete Geometry, Research exposition}

\maketitle

\section{Introduction}

The following question is related to problems that appear in Gamma Knife radiosurgery, studying objects with the X-ray transform, the 3SUM problem, the $k$-linear degeneracy testing, and creating a way to search for ships in the ocean.  This subject also corresponds with geometric tomography.

\begin{problem}\label{points}
Suppose there are $n$ points on the plane, $p_1,p_2,$ $\ldots, p_n$, whose specific locations are unknown. However, we are given all the lines with a specific slope $m$ that go through all points $p_i$.  What is the minimum number of slopes needed to guarantee that the $n$ points are located?
\end{problem}

To clarify the statement, consider the following example.  Suppose that there are three points to be located and we consider the vertical and horizontal lines that pass through these points, see Figure \ref{fig:3ships_examples} for two examples.  In general there will be 3 vertical and 3 horizontal lines that intersect at 9 points.  Thus, there are 9 possible places where the 3 points could be located.  We cannot determine their exact location given only the vertical and horizontal lines that pass through the points.  However, the minimum number of slopes required is greater than 2. Next consider lines of slope 1.  Then the point locations would be found in Figure \ref{fig:3ships_examples}(left), showing that 3 slopes are required in this scenario.  However, in Figure \ref{fig:3ships_examples}(right) an additional slope would need to be considered.  In fact, if we consider the lines of slope $-1$, then the point locations would be revealed, and we conclude that four slopes are required in this case.

    {\centering
    \begin{figure}[h!]
        \centering
        \includegraphics[width=2.25in]{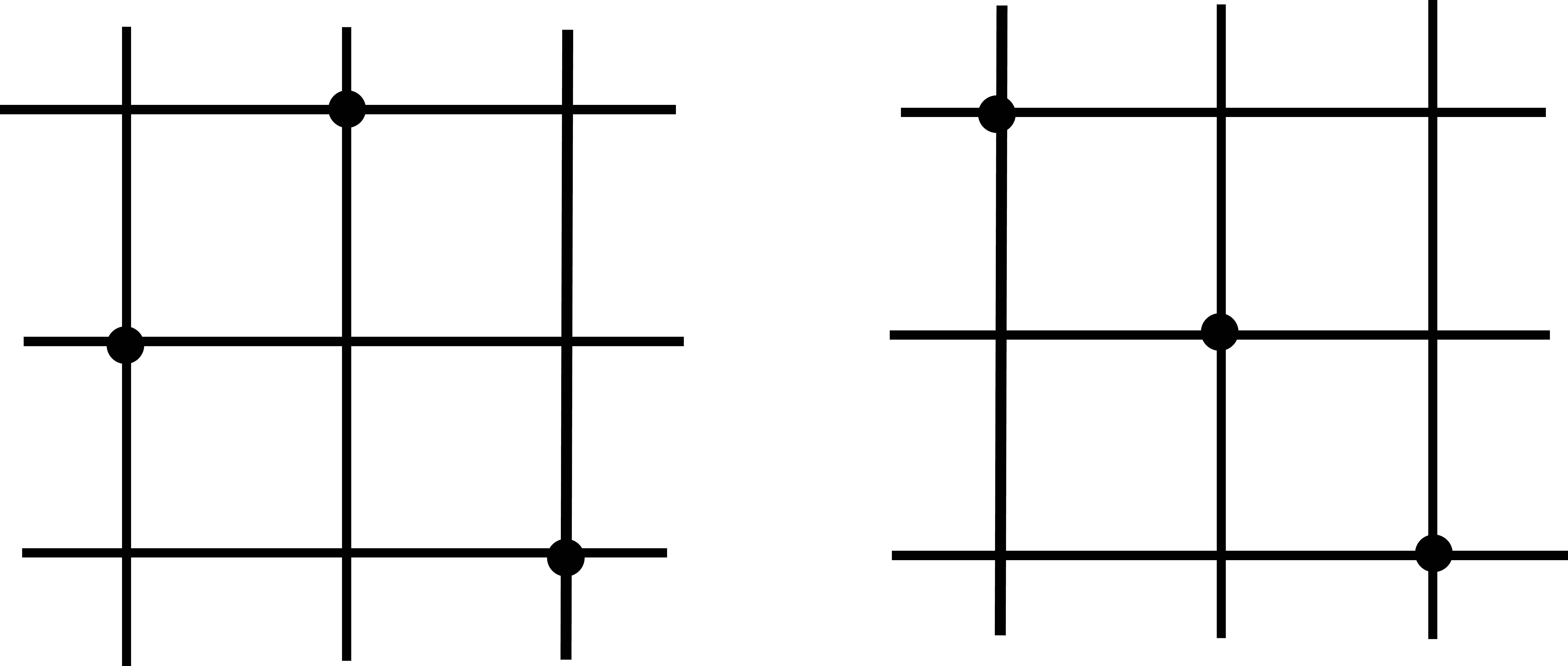}
        \caption{Horizontal and vertical lines given 3 points with potential point points of 9.}
        \label{fig:3ships_examples}
    \end{figure}}

We note that the number of slopes required is the important concept, the exact slopes are not. 

After considering several cases, it is natural to create the following conjecture.

\begin{conjecture} \label{conj}
    Suppose there are $n$ points on the plane.  Then the minimum number of slopes needed in general to guarantee the points are located is $n+1$.
\end{conjecture}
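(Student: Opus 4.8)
The plan is to prove the two inequalities separately: that $n+1$ slopes always suffice, and that $n$ slopes do not, so that the minimum number is exactly $n+1$. For the upper bound, an affine change of coordinates (which does not affect the problem) lets me assume all slopes are finite and pairwise distinct; and since for a fixed configuration only finitely many slopes are "bad", I may further choose the $n+1$ slopes so that each of them meets the $n$ points in $n$ distinct lines. Then "the family of lines of slope $m$" carries exactly the same information as the $X$-ray (orthogonal projection, counted with multiplicity) transverse to $m$. So the upper bound reduces to the uniqueness statement: \emph{two configurations $P\neq Q$ of $n$ points cannot have the same family of lines for $n+1$ distinct slopes}. Assuming they do, put $\mu=\mathbf 1_P-\mathbf 1_Q$, a nonzero integer-valued function on a finite set $S=P\cup Q$ with $|S|\le 2n$; the hypothesis says that for each of the $n+1$ slopes, the sum of $\mu$ over the points lying on any one line of that slope is $0$.

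Next I would encode $\mu$ as a Laurent polynomial. The coordinates of the finitely many points of $S$, together with the chosen slopes, generate a finitely generated subgroup of $\mathbb R$; fixing a $\mathbb Z$-basis of it replaces each point of $S$ by a monomial in finitely many variables, so $\mu$ becomes a nonzero Laurent polynomial $F$ with at most $2n$ terms in a ring that is a UFD. For a slope $m$, "$\mu$ sums to zero on every line of slope $m$" is exactly the statement that $F$ lies in the kernel of the ring map collapsing the rank-one sublattice of exponents in the direction of slope $m$; equivalently, $F$ is divisible by a binomial $B_m=\mathbf t^{\,g_m}-1$ whose Newton segment points in that direction. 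Because the $n+1$ slopes are distinct, the $B_m$ have pairwise non-parallel Newton segments, hence are pairwise coprime (a common divisor would have Newton segment parallel to two distinct directions, so must be a monomial), and therefore $B_{m_0}B_{m_1}\cdots B_{m_n}$ divides $F$.

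The proof then finishes by a term count through Newton polytopes. All exponents occurring live in the image of the two-dimensional plane spanned by the two coordinate functionals, so pushing forward by the associated linear surjection onto $\mathbb R^2$ sends the Newton polytope of $B_{m_0}\cdots B_{m_n}$ onto the Minkowski sum of $n+1$ segments in pairwise distinct planar directions, which is a centrally symmetric $2(n+1)$-gon. Since a linear image of a polytope has no more vertices than the polytope, and a Minkowski sum has at least as many vertices as any summand, the Newton polytope of $F=B_{m_0}\cdots B_{m_n}\cdot H$ has at least $2(n+1)=2n+2$ vertices; each vertex is an actual monomial of $F$, so $F$ has at least $2n+2$ terms, contradicting $|S|\le 2n$. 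Hence $P=Q$. The argument is essentially constructive and underlies the promised algorithm: with $n+1$ slopes fixed, a pigeonhole count of the $2(n+1)$ outermost directions against the at most $n$ normal cones of the convex hull forces a hull vertex that is extremal for at least two slopes, whose position is then the intersection of the two corresponding outermost lines; one records it, deletes it together with the (unique) line through it in each slope family, and recurses on $n-1$ points.

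For the lower bound I would exhibit, for a suitable choice of $n$ slopes, two distinct configurations of $n$ points with identical line families; by the dictionary above this amounts to factoring a product of $n$ binomials in distinct directions whose expansion has only $2n$ terms, so that its positive and negative parts each encode $n$ points. For $n=3$, $(z-1)(w-1)(z-w)=z^2w-zw^2-z^2+w^2+z-w$ does this, yielding the configurations $\{(2,1),(0,2),(1,0)\}$ and $\{(1,2),(2,0),(0,1)\}$, which share the lines of slopes $\infty$, $0$, and $-1$. The main obstacle is the upper bound, specifically the passage from "divisible by $n+1$ binomials in distinct directions" to "at least $2n+2$ monomials" — this is precisely where the bound $n+1$ is forced — together with the bookkeeping needed to make the monomial encoding legitimate when the points have irrational coordinates (handled by the lattice reduction above). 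A secondary difficulty is producing, for every $n$, the cancelling binomial products that realize the lower bound.
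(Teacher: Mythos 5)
Your upper bound, as written, proves a statement with the quantifiers in the wrong order. In the problem the configuration is unknown, so the $n+1$ slopes (or the strategy that picks them) must work for \emph{every} configuration; you instead fix the configuration first and then discard the finitely many ``bad'' slopes so that each chosen slope meets the $n$ points in $n$ distinct lines. For slopes fixed in advance, the adversarial configuration may put two or more points on one line of some slope, and then the given data --- the \emph{set} of lines, with no multiplicities --- no longer coincides with the X-ray: equality of the line families of $P$ and $Q$ does not yield $\sum_{p\in\ell}\mu(p)=0$ on every line $\ell$ (a line may carry two points of $P$ and one of $Q$), so the binomial $B_m$ need not divide $F$ and the Newton-polytope count never gets started. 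What your argument establishes is ``for each configuration there exist $n+1$ slopes that determine it,'' not ``$n+1$ slopes suffice in general.'' The paper's proof of the same bound (Theorem \ref{general}) avoids the issue entirely: if a point not in the configuration lay on one line from each of the $n+1$ families, each of those $n+1$ concurrent lines would have to contain its own, necessarily distinct, actual point, which is impossible with only $n$ points; this needs no multiplicity information, no genericity, and works for arbitrary $n+1$ distinct slopes and arbitrary (possibly collinear) configurations.

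On the sharpness half, your $n=3$ example is correct, and it is in fact a stronger kind of witness than the paper's, since it exhibits two configurations with literally identical data rather than merely a surviving phantom point; but you carry it out only for $n=3$ and you yourself flag that producing the cancelling binomial products for every $n$ is unresolved, so the value $n+1$ is not established for general $n$ in your write-up. The paper instead places the $n$ points at unit distance from a center $z$ along the lines through $z$ with the first $n$ tiered slopes, so that $z$ remains a phantom point; this works for every $n$, though --- like your ``suitable choice of $n$ slopes'' --- it only defeats one particular slope set, which is why the paper states equality only for slopes from its tiered system (Theorem \ref{tieredsystem}). In summary, your route is the discrete-tomography (X-ray/switching-component) route: the algebra (binomial divisibility, coprimality via Newton segments, the zonogon vertex count) is sound machinery, but as set up it does not apply to the paper's actual data and worst-case quantifiers, it is far heavier than the paper's short counting argument, and its lower-bound half is incomplete beyond $n=3$.
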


We prove that the conjecture is true for all values of $n$ (see Theorem \ref{tieredsystem} in Section \ref{main}) under the additional assumption that the slopes come from the tiered system described in the Definitions and Examples Section (Section \ref{definitions}).  In addition, we prove that if there are $n$ points, then the minimum number of slopes needed in general, without using the tiered system, is less than or equal to $n+1$ (see Theorem \ref{general} in Section \ref{main}).

In Section \ref{applications} we discuss the applications. Next, in Section \ref{definitions} we introduce notation and definitions, consider specific examples, and create a table showing the amount of different configurations.  Section \ref{intro_thm_lem} shows that Conjecture 1 is true for the values of $n=1,2,3$ with the added stipulation that the vertical and horizontal lines are separated by 1 unit length.  Then in Section \ref{main}, we prove that the minimum number of slopes needed in general is less than or equal to $n+1$ (Theorem \ref{general}), and if adding the stipulation of the tiered system, then Conjecture 1 is proved for any $n$ (Theorem \ref{tieredsystem}).

\section{Applications} \label{applications}
As stated previously, there are several topics that have provided motivation for this problem.  In particular, similar problems occur in Gamma Knife radiosurgery, studying objects with the X-ray transform, the 3SUM problem, more specifically the $k$-linear degeneracy testing, and creating a way to search for ships on the ocean.

Gamma Knife radiosurgery is a type of procedure that is widely used by medical professionals to treat brain abnormalities such as brain tumors or malformations \cite{Mayo}. Gamma Knife radiosurgery is unique in the sense that it does not require any incisions. Instead, Gamma Knife uses pinpoint precision lasers to provide targeted radiation to the abnormality in the brain. This method has been proven effective and safer because the radiation is only applied to the abnormality reducing the amount of tissue harmed by radiation. Our algorithm could potentially be applied to this medical application of targeting brain abnormalities, due to the fact of directing the lasers in specific directions. 

Another problem occurs when directing a X-ray beam.  The {\it directed X-ray at a point} \cite[p. 195]{Gardner} is defined to be the length of all the rays within a body $K$ when measured from a certain point.  Thus, instead of considering lines that go through a set of points as in our case, it indicates the length of a body in all directions from a specific point.

A similar question which is related to Problem 1 is the 3SUM problem \cite{Gronlund}.  The 3SUM problem asks whether given a set of real numbers, are there three elements of that set that add to 0?  Stated more precisely, given a set $A \subset \mathbb{R}$, determine if there exists $a,b,c \in A$ such that $a+b+c=0$.  Finding $a,b,c$ is akin to finding equations of lines that all go through a point.

Additionally, the $k$-linear degeneracy testing ($k$-LDT), \cite{Gronlund}, supposes we are given a $k$-variate linear function $\phi$, and a set of real numbers $A$, and asks if there exists ${\bf x} \in A^k$ such that $\phi({\bf x})=0$.  Here, $A^k$ means the Cartesian product of $A$, $k$ times.  A detailed statement of this problem is: Fix a $k$-variate linear function $\phi(x_1,\dots,x_k)=\alpha_0+\sum_{i=1}^k \alpha_i x_i$, where $\alpha_0,\ldots,\alpha_k \in \mathbb{R}$, then, given a set $A\subset \mathbb{R}$, determine if $\phi({\bf x})=0$ for any ${\bf x} \in A^k$.  This is similar to our problem of having linear functions that all go through specific points, then determining at which points the lines are concurrent.

Finally, we describe a way to locate ships in the ocean.  This is a specific example that could be applied in various other scenarios.  Suppose there are missing ships in the ocean and we need to locate them by using a spotlight.  The spotlight can only travel in a straight line and cannot tell us the distance between it and the object.  We then suppose there is a moving border ship that points the spotlight in a specific direction (prescribing the slope of the line) and records its own location at the moment when the spotlight detects a missing ship.  The process is repeated in the other directions.  Our algorithm guarantees that the border ship would be able to find the missing ships with at most a specific number of directions for the spotlight.

\section{Definitions and Examples} \label{definitions}

The number of points will be denoted by $n$, and the minimum number of slopes needed in general for a given $n$ will be denoted by $s$.  We define a {\it potential point} to be a location where there may or may not be a point in our configuration.

Due to the importance in finding the minimum number of slopes required and not the specific slopes, we create a tiered system.\\

{\bf Tier 1}: $\{$horizontal, vertical$\}$ 

{\bf Tier 2}: $\{ 1 , -1 \}$ 

{\bf Tier 3}: $\{ \pm 1/2, \pm 2 \}$

{\bf Tier 4}: $\{ \pm 1/4, \pm 4, \pm 3/4, \pm 4/3 \}$

$\vdots$ 

{\bf Tier n}: $\{ \pm k/2^{n-2}, \pm 2^{n-2}/k : k \mbox{ is odd and } 0 < k < 2^{n-2} \}$ 

$\vdots$ \\

The tiered system works as follows.  If two slopes are required, then only the slopes from Tier 1 will be considered.  If three slopes are required, then two of the slopes will be from Tier 1 and the last slope will be from Tier 2 (either 1 or $-1$).  If 4 slopes are required, then all slopes from Tier 1 and Tier 2 will be considered.

We first note that Tier 1 is chosen without loss of generality, any two slopes can be linearly transformed into vertical and horizontal lines.  Tier 2 was chosen to have the angles and lines as separate as possible from the lines in Tier 1.  This is to address the real world example dealing with tumors or ships, we would not know how wide a ``point'' (a tumor, or ship) would be and thus would want the biggest angle between slopes to guarantee that we are not looking at the same ``point''.  With similar reasoning, we choose the remaining tiers.

Let us consider various examples related to Problem \ref{points}.  First we look at Figure \ref{fig:3ships_examples} in more detail.  Three points are given, and when the vertical and horizontal lines through the points are considered, there are 9 potential points.  If the lines with a slope from Tier 2 are added, (for example the slope 1), then in the case of Figure \ref{fig:3ships_examples} (left) the points would be located.  On the other hand, in Figure \ref{fig:3ships_examples} (right) there are still 5 potential points and an additional slope is required.  Thus, consider lines of slope $-1$, this would reveal the locations of the points, concluding that 4 slopes are required. 


Sometimes it is the case that choosing the slopes in a different order would  require a smaller amount of slopes to reveal the location of the points.  This is the case with Figure \ref{fig:3ships_examples} (right).  The points would be located using 3 slopes (instead of 4) if the vertical and horizontal lines are considered and then the lines with slope $-1$.

In Figure \ref{fig:3pointsexample2}, more 3 point configurations are considered.  After drawing the vertical and horizontal lines, in Figure \ref{fig:3pointsexample2} (left) and (middle) there are 6 potential points, and in Figure \ref{fig:3pointsexample2} (right) there are 3 potential points.  This implies that in Figure \ref{fig:3pointsexample2} (right) the 3 points are revealed and only 2 slopes are required in this example.

For Figure \ref{fig:3pointsexample2} (left), now consider lines with slope 1, then there would be 4 potential points and hence an additional slope is required.  Considering lines with slope $-1$ would then reveal the exact point locations.  A similar argument works for Figure \ref{fig:3pointsexample2} (middle).  This indicates that with the configurations analyzed so far, if $n=3$ then $s \geq 4$, in general.

    {\centering
    \begin{figure}[h!]
        \centering
        \includegraphics[width=3in]{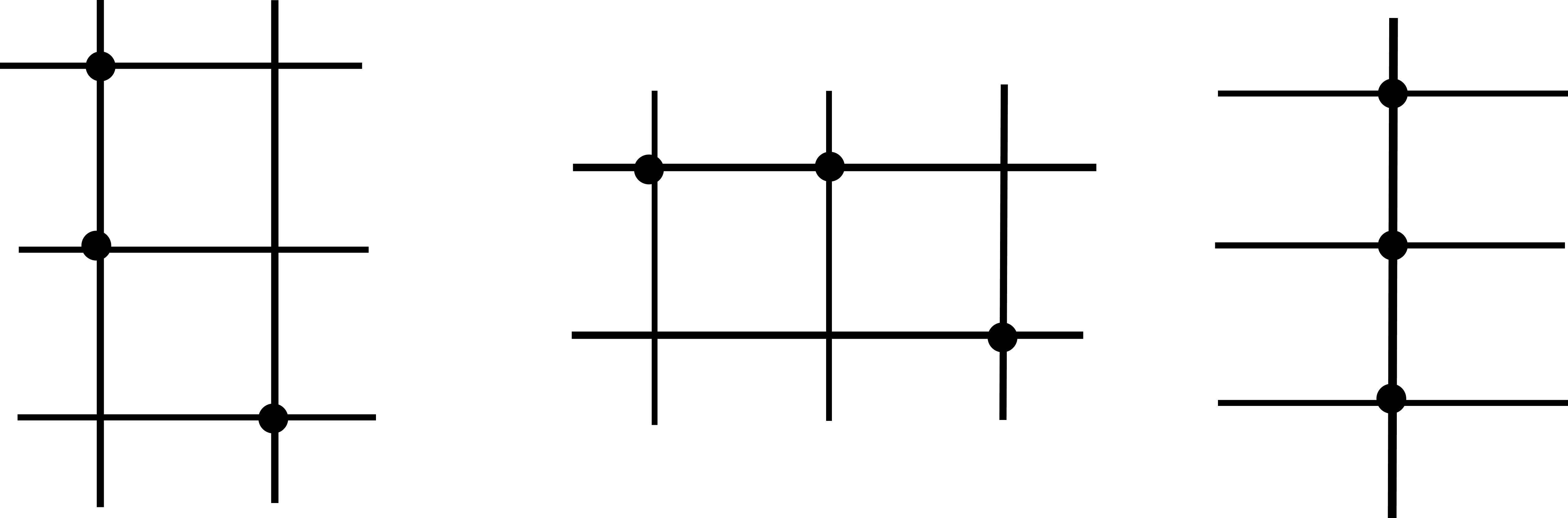}
        \caption{Given 3 points with the horizontal and vertical lines drawn.  The number of potential points are 6,6, and 3 respectively.}
        \label{fig:3pointsexample2}
    \end{figure}}

Table \ref{table1} illustrates the number of possible examples containing up to 4 points.  For this table the following slope order is used: horizontal, vertical, 1, $-1$, 1/2, $-2$, 2, $-1/2$.  This slope order is used to create the table so that there is a methodical way of computing its entries.  In Section \ref{intro_thm_lem}, the proofs of the theorems use our tiered system and do not abide by this order.  In the last column of the table the number of examples that are associated with each specific number of slopes required from the previous column is computed.

\begin{table}
$$\begin{array}{|c|c|c|c|}
    \hline \mbox{Points} & \mbox{Examples} & \mbox{Slopes required} & \mbox{Number of examples} \\
    & & & \mbox{requiring those slopes} \\
    \hline
    1 & 1 & 2 & 1 \\ \hline
    2 & 4 & 2 & 2 \\
      &   & 3 & 2 \\ \hline
    3 & 24 & 2 & 2 \\ 
    & & 3 & 9 \\
    & & 4 & 13 \\ \hline
    4 & 196 & 2 & 3 \\
    & & 3 & 23 \\
    & & 4 & 104 \\
    & & 5 & 66 \\ \hline
\end{array}
$$
\caption{Number of examples given a specific amount of points.  In addition, the number of examples with a specific number of slopes required is computed.}\label{table1} 
\end{table}

After analyzing these examples for the values $n=1,2,3$, the number of configurations is small, allowing the examples to be computed by hand and to create initial strategies that will help with bigger values of $n$.  For $n=4$ the number of examples is larger, and for $n=5$, the number of examples increases to an amount not reasonable to compute by hand.  We utilize a different approach to prove our main results Theorem \ref{general} and \ref{tieredsystem}, which will be discussed in Section \ref{main}.

\section{Introductory Theorems} \label{intro_thm_lem}

In this section we utilize the tiered system and assume that the horizontal and vertical lines are separated by a unit distance.  We prove Conjecture \ref{conj} for $n=1,2,3$ in this special scenario using a case by case argument. 

To begin, assume there is one point on the plane.

\begin{theorem} \label{one} If there is one point, then two slopes are required to locate that point.
\end{theorem}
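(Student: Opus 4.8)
The plan is to prove this by establishing both a lower bound (at least two slopes are needed) and a matching upper bound (two slopes suffice). For the lower bound, I would argue that a single slope cannot locate the point: given only the family of lines with one fixed slope $m$ passing through the point $p_1$, there is exactly one such line (since all points with that slope through $p_1$ lie on a single line), and every location along that line is consistent with the given data. Hence the point's position is not determined by one slope, so $s \geq 2$.

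For the upper bound, I would show that two slopes always suffice when there is a single point. Using Tier 1, we are given the vertical line through $p_1$ and the horizontal line through $p_1$. These two lines have distinct slopes, so they intersect in exactly one point, which must be $p_1$. Therefore two slopes locate the point, giving $s \leq 2$.

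Combining the two bounds yields $s = 2$, which is exactly $n+1$ for $n=1$, consistent with Conjecture \ref{conj}. I do not anticipate any real obstacle here; the only subtlety worth stating explicitly is why one slope genuinely fails — namely that the single given line is the whole solution set and contains infinitely many candidate positions — and this is precisely the kind of observation the introductory section is meant to make concrete before the harder cases $n=2,3$ and the general argument in Section \ref{main}.
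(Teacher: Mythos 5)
Your argument is correct and follows essentially the same route as the paper: one slope fails because the single given line leaves the position undetermined, and the horizontal and vertical lines through the point intersect in exactly one location, which must be the point. The paper's proof states exactly this, with your lower-bound remark being only a slightly more explicit version of the paper's ``the position of the point from this one slope cannot be determined.''
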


\begin{proof} 
Suppose there is one point.  Without loss of generality, consider the horizontal slope from Tier 1.  There will be one horizontal line that goes through the point.  The position of the point from this one slope cannot be determined, thus the second slope from Tier 1, namely the vertical slope, must be considered.  There will be one vertical line that goes through the point and which intersects the horizontal line.  This point of intersection is the point we are looking for. 

\end{proof}

        {\centering
    \begin{figure}[h!]
        \centering
        \includegraphics[width=1.75in]{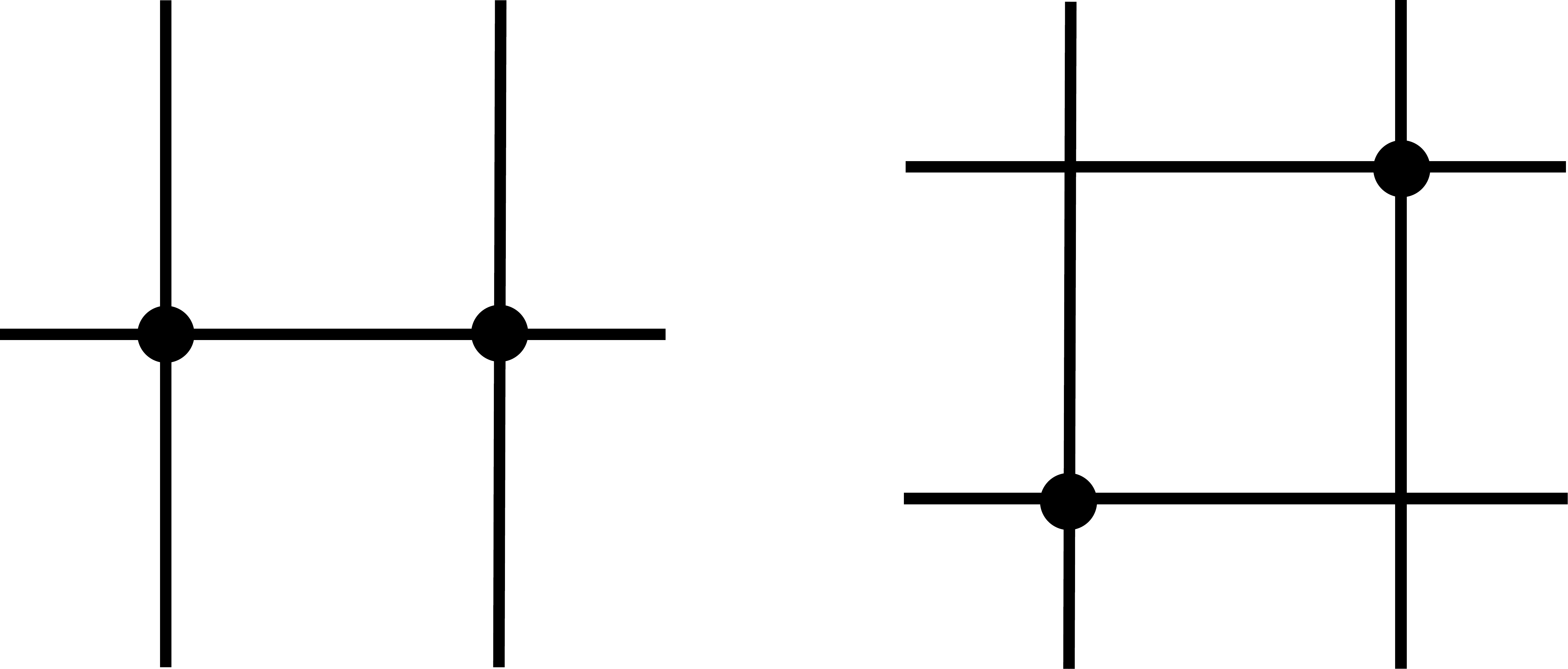}
        \caption{The required number of slopes needed is two on the left and three on the right.}
        \label{fig1}
    \end{figure}}
    
Next we consider if there are two points.

\begin{theorem} \label{two}
If there are two points, then the minimum number of slopes needed in general is three.
\end{theorem}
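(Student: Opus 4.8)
The plan is to prove the statement in two halves: first that two slopes can fail to locate two points, so that $s \geq 3$, and then that three slopes from the tiered system always succeed, so that $s \leq 3$.

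For the lower bound I would use the configuration in Figure \ref{fig1} (right): two points placed at opposite corners of a unit square. With only the two slopes of Tier 1 we see two horizontal and two vertical lines, which meet in four potential points, namely the four corners of the square. Exactly the same four lines are produced whether the two points lie on the diagonal of slope $1$ or on the diagonal of slope $-1$, so Tier 1 alone cannot tell these two configurations apart. Since Tier 1 is, up to a linear change of coordinates, an arbitrary pair of slopes, no two slopes suffice in general, and hence $s \geq 3$.

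For the upper bound I would take an arbitrary two-point configuration and first draw the horizontal and vertical lines. If the two points share a $y$-coordinate, there are one horizontal and two vertical lines whose two intersection points are exactly the two points; symmetrically if they share an $x$-coordinate. In the remaining case the two points have distinct $x$- and distinct $y$-coordinates, so we see two horizontal and two vertical lines meeting in four potential points, which under the unit-distance hypothesis are the corners of a unit square. Since each of the two horizontal lines and each of the two vertical lines must contain one of the points, the points occupy two opposite corners: either the pair lying on a line of slope $1$ or the pair lying on a line of slope $-1$. Now adjoin the lines of slope $1$ from Tier 2. In the first sub-case both points lie on a single line of slope $1$ (the positive diagonal itself), whereas in the second sub-case we obtain two distinct lines of slope $1$, parallel translates of that diagonal offset by one unit, each passing through exactly one of the four corners and neither equal to the positive diagonal. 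Thus the slope-$1$ picture is different in the two sub-cases, so it identifies which sub-case holds, and in either sub-case the exact point locations are then read off as the corners the slope-$1$ lines meet. Hence three slopes always suffice, and combining with the lower bound gives $s = 3$.

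The step I expect to be the main obstacle is this last case of the upper bound: one must argue carefully that the third slope genuinely separates the two surviving candidate configurations rather than producing another ambiguous picture. The unit-square geometry is exactly what makes this work, since the positive diagonal is itself a line of slope $1$ (collapsing the two slope-$1$ lines to one) while the negative-diagonal pair yields two visibly distinct slope-$1$ lines; I would want to record this comparison explicitly rather than invoke a generic-position argument, because the configuration in question is deliberately non-generic.
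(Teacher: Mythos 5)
Your proposal is correct and follows essentially the same route as the paper: split on whether the horizontal and vertical lines yield 2 or 4 potential points, and in the square case use a Tier 2 slope, whose line pattern (one diagonal line versus two offset parallels) distinguishes the two opposite-corner placements. You merely make explicit what the paper compresses, namely the lower-bound half (indistinguishability of the two diagonal placements under any two slopes, via the linear-change-of-coordinates remark) and the verification that slope $1$ resolves both sub-cases, where the paper instead invokes the $90$-degree rotation.
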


\begin{proof}

Suppose there are two points on the plane and consider the horizontal and vertical lines through those points.  There may be 2 or 4 potential points.  If there are 2 potential points, see Figure \ref{fig1} (left) for an example, then those points are in fact the actual points in the configuration.  The remaining case is when there are 4 potential points.  Consider the example in Figure \ref{fig1} (right).  Then an additional slope is required and either slope from Tier 2 will suffice.  The only additional configuration in this case can be obtained by rotating the configuration in Figure \ref{fig1} (right) by 90 degrees, and the same conclusion is reached.

\end{proof}

\begin{remark}
    The tiered system assists in creating an algorithm to find the points in Theorem \ref{one} and \ref{two}.  However, generic slopes could be used to complete these proofs.
\end{remark}

Next, there is an important Lemma that will assist us in proving Theorem \ref{three}.  It shows that the number of slopes required to reveal a configuration is invariant under rotations and reflections.  Notice, this Lemma is true in general and does not have any added stipulation on the distance between horizontal and vertical lines.


\begin{lemma}\label{rr}
If there is a configuration that reveals the locations of the points with $t$ number of slopes, then the following transformations of the configuration will reveal its points with $t$ number of slopes as well:
\begin{enumerate}[(i)]
    \item rotations by 90, 180, 270 degrees counterclockwise,
    \item horizontal or vertical reflections through the center of the configuration.
\end{enumerate}
\end{lemma}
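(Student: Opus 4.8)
The plan is to show that applying one of the listed rigid motions to a configuration induces a bijection between the set of slope-directions used for the original configuration and a corresponding set for the transformed configuration, in such a way that ``the lines of slope $m$ through the original points'' correspond exactly to ``the lines of the transformed slope through the transformed points.'' Since the reconstruction procedure only cares about the combinatorial intersection pattern of these line families (which potential points are hit by how many lines in each direction), and a rigid motion is an affine bijection of the plane that carries lines to lines and preserves incidences, the number of potential points after each stage of the algorithm is identical for the two configurations. Hence exactly $t$ slopes suffice for one if and only if $t$ suffice for the other.

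First I would set up notation: let $T$ be one of the eight transformations in (i)--(ii) (the three nontrivial rotations, the two axis reflections through the center, composed as needed), and note $T$ is an isometry of $\mathbb{R}^2$, hence affine and a bijection. For a slope $m$ (including the ``vertical'' slope, which I would handle by working with directions in $\mathbb{RP}^1$ rather than with real-number slopes, to avoid the $m=\infty$ annoyance), $T$ maps the pencil of all lines of direction $m$ to the pencil of all lines of some direction $m' = T_*(m)$, and the induced map $m \mapsto T_*(m)$ on directions is a bijection. Concretely, a $90^\circ$ rotation sends slope $m$ to $-1/m$, a $180^\circ$ rotation fixes every direction, and the horizontal/vertical reflections send $m$ to $-m$; in every case $T_*$ permutes $\mathbb{RP}^1$ bijectively.

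Next comes the key step: I would argue that the input data are transformed covariantly. Given points $p_1,\dots,p_n$ and a direction $m$, the data ``the set of lines of direction $m$ through some $p_i$'' is carried by $T$ to ``the set of lines of direction $T_*(m)$ through some $T(p_i)$,'' because $T(\ell)$ is a line of direction $T_*(m)$ and $p_i \in \ell \iff T(p_i) \in T(\ell)$. Consequently, if an algorithm is given the line-data for the original configuration in directions $m_1,\dots,m_t$, the very same algorithm run on the transformed configuration in directions $T_*(m_1),\dots,T_*(m_t)$ sees the image under $T$ of everything it saw before: in particular the set of potential points at every stage is the $T$-image of the original set of potential points. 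So the set of potential points collapses to exactly $\{p_1,\dots,p_n\}$ after $t$ directions precisely when the transformed set collapses to $\{T(p_i)\}$ after the $t$ transformed directions. This shows $t$ suffices for the transformed configuration. Applying the same reasoning to $T^{-1}$ (also one of the listed transformations, or a composition thereof) gives the reverse implication, so the minimum number of required slopes is exactly preserved.

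The main obstacle, and the only place needing care, is that the statement is phrased in terms of the tiered system of slopes, so one must check that $T_*$ respects the tiers: it is not enough that some set of $t$ directions works — it must be that $t$ directions drawn according to the tiered protocol work. I would verify that each transformation permutes each tier among itself. For Tier~1 this is clear ($90^\circ$ swaps horizontal and vertical; reflections fix the pair; $180^\circ$ fixes both). For Tier~2, $m \mapsto -1/m$ swaps $1 \leftrightarrow -1$, and $m \mapsto -m$ swaps $1 \leftrightarrow -1$, so $\{1,-1\}$ is preserved. For the general tier, $\{\pm k/2^{n-2}, \pm 2^{n-2}/k\}$, the maps $m \mapsto -m$ and $m \mapsto -1/m$ clearly send this set to itself (negation is obvious; reciprocal-with-sign swaps the ``$k/2^{n-2}$'' slopes with the ``$2^{n-2}/k$'' slopes, and $k$ odd with $0<k<2^{n-2}$ is preserved). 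Hence $T_*$ is a tier-preserving bijection on directions, the tiered protocol for the transformed configuration is carried to the tiered protocol for the original (up to relabeling within a tier, which the authors already note is immaterial), and the invariance of the required number of slopes follows. I expect the write-up to be short once this tier-compatibility bookkeeping is dispatched.
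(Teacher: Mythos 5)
Your proposal is correct and follows essentially the same route as the paper: the authors likewise track how each rotation/reflection acts on slopes (90$^\circ$ sends $m$ to $-1/m$, reflections send $m$ to $-m$, 180$^\circ$ fixes slopes), observe that each tier of the tiered system is permuted into itself, and conclude that the transformed lines reveal the transformed configuration with the same $t$ slopes. Your extra covariance/bijection framing and the $T^{-1}$ remark are just a more formal packaging of the same argument.
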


\begin{proof}
(i) First consider rotating the configuration 90 degrees counterclockwise.  Now the lines that have horizontal and vertical slopes will become vertical and horizontal, respectively, thus staying in Tier 1.  The lines with slopes 1 and $-1$ will become lines with slopes $-1$ and 1, respectively.  Finally, lines with slope $k/2^{n-2}$, $2^{n-2}/k$, $-k/2^{n-2}$, $-2^{n-2}/k$ will become lines with slope $-2^{n-2}/k$, $-k/2^{n-2}$, $2^{n-2}/k$, $k/2^{n-2}$.  Hence, after rotation the lines within the specific tier will remain in that tier.  If a configuration reveals the points with $t$ number of slopes, the rotated configuration will reveal the points with the lines also being rotated, and will require $t$ number of slopes.

Rotating the configuration and the lines by 180 degrees will also reveal the points.  We note that all lines with slope $m$ will keep the same slope when rotated.  Thus in this case, the same lines will reveal the rotated configuration.

Now consider rotating the configuration 270 degrees counterclockwise.  The rotated lines will become the exact same lines when considering the rotation by 90 degrees counterclockwise.  Thus the configuration is revealed.

(ii) Consider the horizontal reflection through the center of the configuration.  The horizontal lines will remain horizontal, and the vertical lines will remain vertical.  Lines with slope 1 and $-1$ will reflect to be lines with slope $-1$ and 1, respectively.  Similarly, lines with slope $k/2^{n-2}$, $2^{n-2}/k$, $-k/2^{n-2}$, $-2^{n-2}/k$ will become $-k/2^{n-2}$, $-2^{n-2}/k$, $k/2^{n-2}$, $2^{n-2}/k$, respectively.  Again, the slopes remain in their specific tier.  Thus, if a configuration is revealed using $t$ slopes, the reflected configuration will be revealed using the reflected lines and using $t$ slopes.

The argument for the vertical reflection is similar.
\end{proof}

We note that when a configuration is rotated or reflected, it may require a smaller amount of slopes to reveal the point's locations.  The important thing to realize is that Lemma \ref{rr} states that $t$ is at least sufficient in revealing the configuration.



Now we are ready to consider the case of 3 points on the plane.  We break the proof into 4 cases, discuss various examples, and utilize the geometry of the cases to find the number of slopes required in all situations.

\begin{theorem} \label{three}
    If there are three points, then the minimum number of slopes needed in general is four.
\end{theorem}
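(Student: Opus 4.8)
The plan is to prove the two bounds $s \le 4$ and $s \ge 4$ for $n=3$ separately. Throughout, normalize as in this section so that the horizontal lines occur at consecutive integer heights and the vertical lines at consecutive integer abscissas; up to translation the three points then lie on an $a \times b$ grid, where $a$ is the number of distinct $x$-coordinates and $b$ the number of distinct $y$-coordinates, so $ab \in \{3,4,6,9\}$ --- these are the four cases.

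For the upper bound $s \le 4$ I would show that the four slopes of Tiers 1 and 2 (horizontal, vertical, $1$, $-1$) reveal the three points in every case. When $ab = 3$ the three potential points are forced to be the points themselves, so two slopes already suffice. In each remaining case Lemma \ref{rr} cuts the work down: it identifies $(a,b)=(3,1)$ with $(1,3)$ and $(3,2)$ with $(2,3)$, and collapses the configurations within a case to a short list of representatives. The cleaner statement I would actually aim to prove is that every three-point configuration is revealed already by $\{\text{horizontal},\text{vertical},1\}$ or by $\{\text{horizontal},\text{vertical},-1\}$: if the slope-$1$ lines leave an ambiguity, then the slope-$(-1)$ lines force two of the three points onto a common line, and this coincidence shrinks the grid of potential points enough that only the true configuration survives (and symmetrically with the roles of $1$ and $-1$ swapped). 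The forcing is a short arithmetic fact --- for instance, in the $(3,3)$ case one has $\sum_i (y_i - x_i) = 0$, so the three numbers $y_i - x_i$ cannot be three distinct admissible values unless two of the sums $x_i + y_i$ coincide --- and the case-by-case verification of the ``shrinking'' is then routine. Either way, $s \le 4$.

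For the lower bound $s \ge 4$ it is enough to exhibit a single configuration that three slopes fail to locate. The only three-slope choices in the tiered system are $\{\text{horizontal},\text{vertical},1\}$ and $\{\text{horizontal},\text{vertical},-1\}$. Take the points $(0,0),(1,2),(2,1)$; they lie on the vertical lines $x=0,1,2$, the horizontal lines $y=0,1,2$, and the slope-$1$ lines $y=x,\ y=x+1,\ y=x-1$, and one checks that the configuration $(0,1),(1,0),(2,2)$ produces exactly these same three families of lines, so $\{\text{horizontal},\text{vertical},1\}$ does not reveal the points. Reflecting through the center of the configuration --- which by the proof of Lemma \ref{rr} keeps Tiers 1 and 2 invariant and interchanges slopes $1$ and $-1$ --- carries this example to one on which $\{\text{horizontal},\text{vertical},-1\}$ fails for the same reason. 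Hence no three-slope choice from the tiered system guarantees that the three points are located, so $s \ge 4$, and with the upper bound $s = 4$.

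The step I expect to be the main obstacle is the bookkeeping in the upper bound. ``Revealed'' means that the true configuration is the \emph{unique} set of three points inducing the given lines, so each sub-case requires ruling out every rival triple, and on the degenerate grids (such as $\{0,1,2\}^2$, where a single slope-$\pm 1$ line can contain two or even three grid points) it is easy either to overlook a rival configuration or to mis-enumerate which potential points lie on a given diagonal line. The way to keep this manageable is to lean on Lemma \ref{rr} to minimize the number of representatives, and to isolate once and for all the arithmetic observation that in each case one of the two diagonal pencils makes two of the three points collinear.
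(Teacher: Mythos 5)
Your proposal is correct, and it reaches the theorem by a route that overlaps with, but reorganizes, the paper's argument. The paper's proof also splits into the four grid shapes ($3$, $4$, $6$, or $9$ potential points) and works through them with the Tier 1 and Tier 2 slopes taken adaptively in the order horizontal, vertical, $1$, then $-1$, invoking Lemma \ref{rr} to discard rotated and reflected duplicates; the branches in which the fourth slope is actually consumed serve, implicitly, as the minimality part. You instead separate the two bounds cleanly. For the upper bound, your stronger claim---that every three-point configuration is already revealed by $\{\text{horizontal},\text{vertical},1\}$ or by $\{\text{horizontal},\text{vertical},-1\}$---is in fact true on all four grid shapes, and your arithmetic observation is the right engine in the $3\times 3$ case: distinct values of $y_i-x_i$ force a repeated value of $x_i+y_i$, so a configuration that is ambiguous for one diagonal pencil collapses to exactly three potential points under the other. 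For the lower bound, your witness pair is correct: $(0,1),(1,0),(2,2)$ produces the same horizontal, vertical, and slope-$1$ lines as $(0,0),(1,2),(2,1)$, and the reflected example defeats $\{\text{horizontal},\text{vertical},-1\}$; since the tiered system offers no other three-slope menu, $s\ge 4$. This is tidier than the paper's treatment of minimality, which is tied to trying slope $1$ before $-1$ (the paper itself notes, after Figure 1, that reversing that order can save a slope), whereas your argument shows that no single three-slope choice from the tiered system works for all configurations. What the paper's ordered case analysis buys in exchange is an explicit step-by-step algorithm and a record of which configurations need $2$, $3$, or $4$ slopes. The one soft spot in your write-up is the one you flag yourself: the ``routine'' verification that the surviving diagonal pencil pins down the points in the $2\times 2$, $2\times 3$, and $3\times 3$ sub-cases is asserted rather than carried out; it does go through, and the level of detail is comparable to the paper's own Case 4, but a final version should list the representatives (Lemma \ref{rr} reduces them to a handful) and check each explicitly.
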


\begin{proof}
Consider a configuration of $n=3$ points, and the horizontal and vertical lines for that example.  There are four different cases for the number of potential points, namely 3,4,6, or 9.  In the figures for this proof, we typically exclude reflections or rotations of configurations due to Lemma \ref{rr}.

{\bf Case 1: 3 potential points.} The location of the points are revealed with only horizontal and vertical lines.  See for example Figure \ref{fig:3points4Intersections} (left).  In this case the number of slopes required is two.

{\bf Case 2: 4 potential points.} 
This implies there will be two horizontal lines and two vertical lines, each pair being separated by 1 unit length by assumption.  An additional slope is required.  Consider a slope from Tier 2, say $m=1$, and the lines with that slope.  If there are only 3 potential points, then the configuration is revealed and the number of slopes required is 3.  However, if there are 4 potential points then there are 3 lines of slope 1, see Figure \ref{fig:3points4Intersections} (right) for an example.  The outer most diagonals will contain a point by default and the middle diagonal will contain exactly one point.  Thus, we know where two points are located.  When the remaining slope from Tier 2 is considered ($m=-1$ in this case), the two points we know will be connected with a single line.  The last point will be revealed from the other line with this final slope.

    {\centering
    \begin{figure}[h!]
        \centering
        \includegraphics[width=2in]{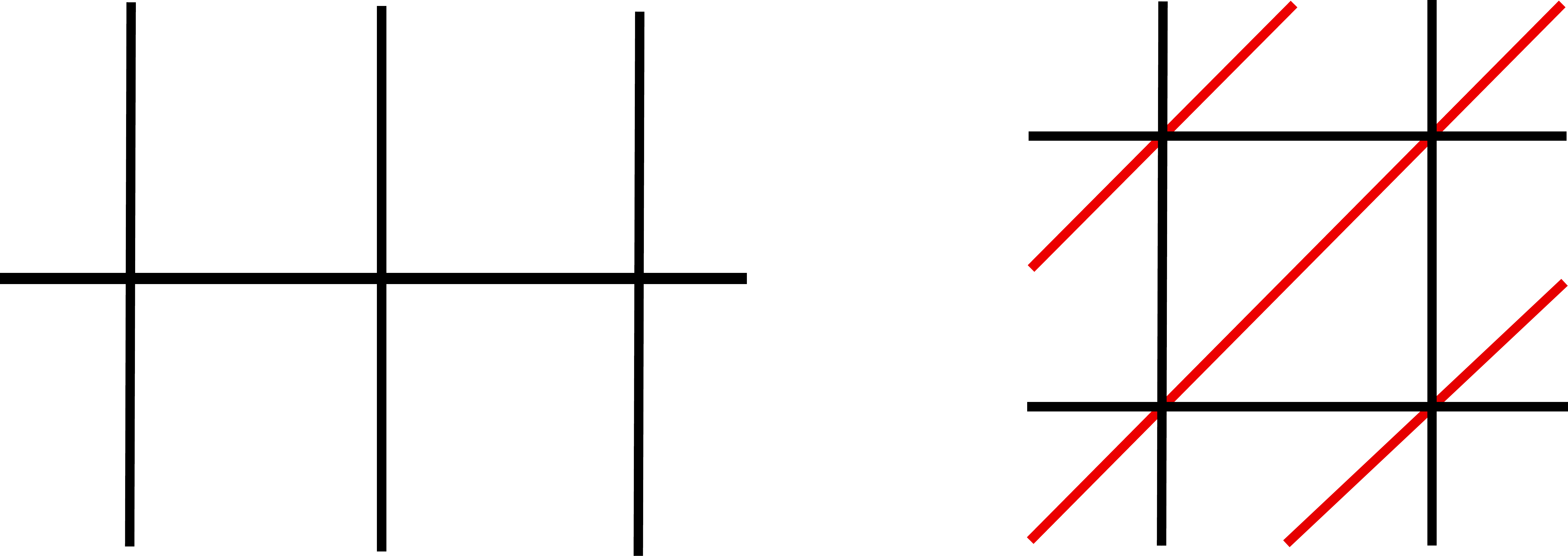}
        \caption{(left) On the left, there are 3 potential points.  On the right, there are 4 potential points with slope of 1.}
        \label{fig:3points4Intersections}
    \end{figure}}

{\bf Case 3: 6 potential points.} First consider the case where there are two rows and three columns.  The number of potential points after we consider the lines of slope 1 is either 3 or 4.  
It is impossible to have 5 or 6 potential points in this instance as we get contradictions in all cases.  Indeed, in Figure \ref{3points2Rows3Columns} (left), the only way to get 6 potential points is to have all 4 lines of slope 1 exist, but this implies that there is at least 4 points which is a contradiction.  For 5 potential points, considering Figure \ref{3points2Rows3Columns} (left) again, two inner most diagonals must be included and one outer diagonal.  Note, there must be a point in each column because there are 3 points in the configuration.  Since there is only 1 point on the outer diagonal, that must be an actual point.  The next inner diagonal reveals a point in the same row as the first point because of the one point per column stipulation, and similarly for the final inner diagonal.  This implies that all points are in the same row and the other row does not have a point on it, which is a contradiction.  A similar argument can be made if the potential points formed 3 rows and 2 columns.

If there are 3 potential points then the points are revealed.  Thus we assume there are 4 potential points after drawing the lines of slope 1.  There are two ways to have 4 potential points, namely 3 or 2 lines of slope 1, see Figure \ref{3points2Rows3Columns} (left).  In the first case of 3 lines of slope 1, the three lines must include the outer most diagonals and one inner diagonal (if both inner diagonals are used, then there is a contradiction), see Figure \ref{3points2Rows3Columns} (middle) for an example.  Thus, the two corner points, the top left and bottom right, are actual points and the point that is in the middle column is revealed from the remaining line of slope 1 (otherwise, the middle column would not have a point located on it and would not exist).  Thus, 3 slopes are required.  In the remaining case for 4 potential points, there are 2 lines of slope 1, these 2 lines must be the two inner diagonals, see Figure \ref{3points2Rows3Columns} (right) for instance.  In order for the configuration to include the left and right vertical lines, the points in the first and last column must be actual points. Furthermore, the point in the middle column cannot be determined, as there are two options, thus an additional slope is required.  Considering the lines of slope $-1$ will reveal the point in the middle column.  Hence, 4 slopes are required.

Next consider the configurations where there are three rows and two columns.  Then by Lemma \ref{rr}(i), the maximum number of slopes needed is 4. 

{\centering
    \begin{figure}[h!]
        \centering
        \includegraphics[width=3.5in]{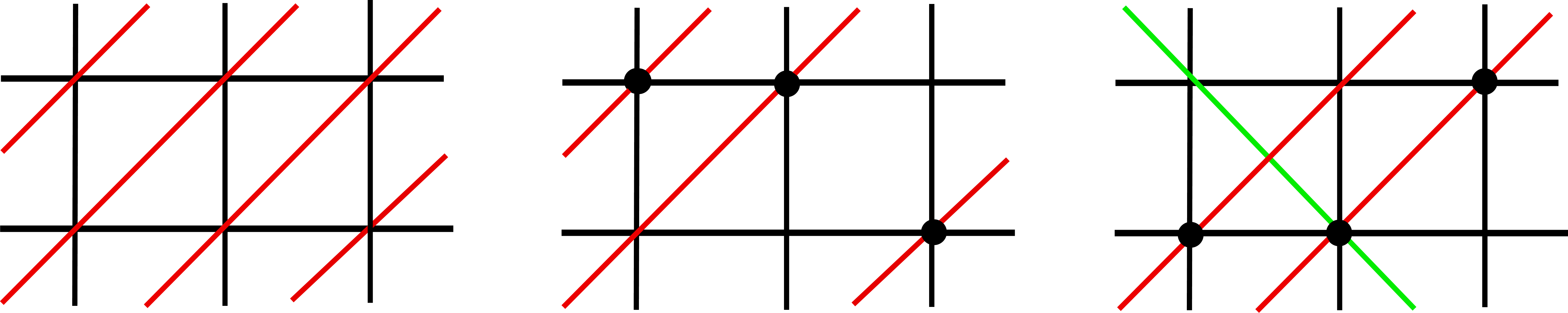}
        \caption{Potential lines of slope 1.  To have 4 potential points, there may either be 3 or 2 lines of slope 1.}
        \label{3points2Rows3Columns}
    \end{figure}}

{\bf Case 4: 9 potential points.} We note that having 9 potential points implies that each point must be in its own row and column. There are 6 possible configurations, of which only 2 are unique up to rotation and reflection, see Figure \ref{3points3Rows3Columns}. In all examples, 4 slopes is enough to reveal the locations of the points.

{\centering
    \begin{figure}[h!]
        \centering
        \includegraphics[width=2.5in]{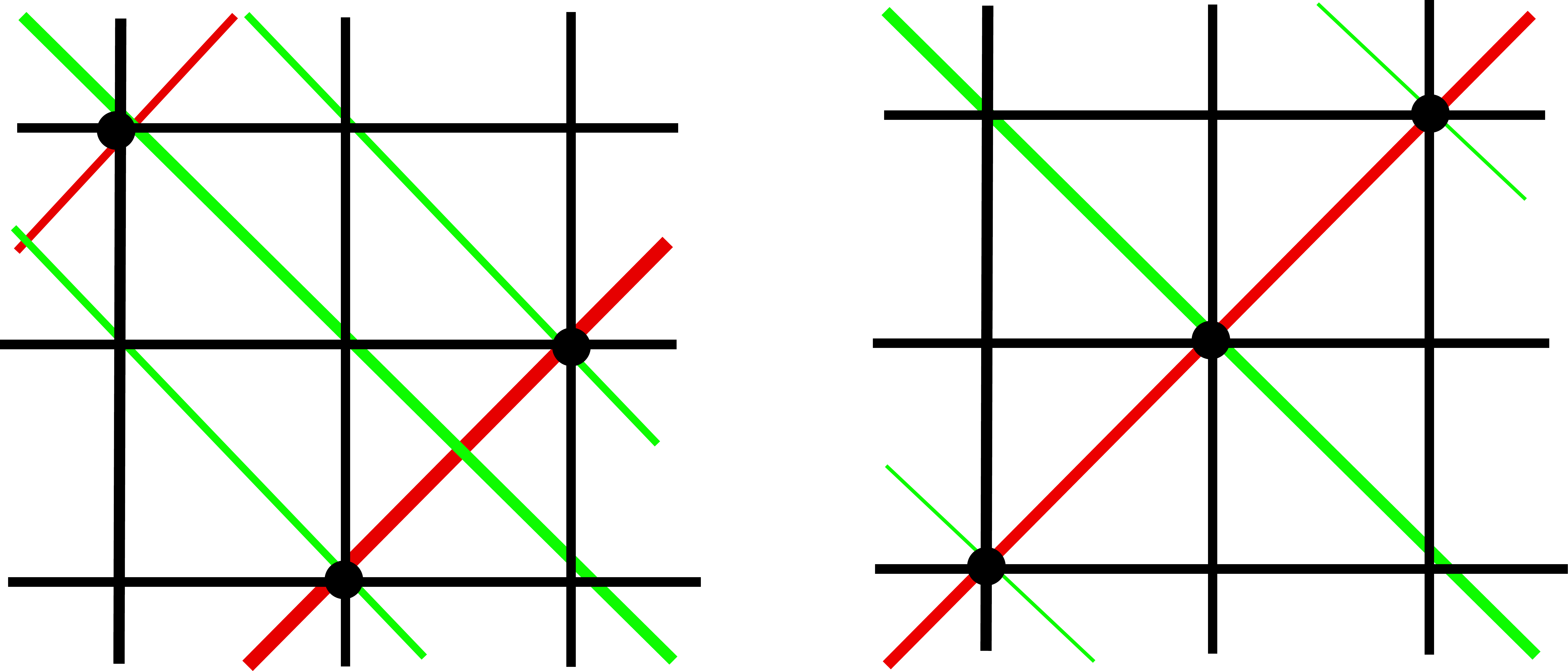}
        \caption{Configurations where $n=3$ and 9 potential points. Each point is in its own row and column.}
        \label{3points3Rows3Columns}
    \end{figure}}

\end{proof}

In the next section, we will discuss a way to avoid considering all the different cases.

\section{General Results}\label{main}

Instead of computing directly the number of slopes required for all configurations for $n \geq 4$, we work backwards.  The worse case scenario would be that we have used the claimed maximum number of slopes and there is still a potential point that is not an actual point.  We will refer to a point in this case as a {\it phantom point}.  Then we prove that there are no phantom points in any configuration by reaching a contradiction in all cases in question.  

Let us consider an example for further clarification. Suppose $n=4$ and there are 3 horizontal and vertical lines, see Figure \ref{Phantom_Point}.  We claim that $s=5$ and consider the horizontal and vertical slopes, along with slopes of 1, $-1$ and 1/2.  Finally, let us assume that there is a phantom point in the second row, first column (as pictured in Figure \ref{Phantom_Point}).  To be a phantom point all lines with the listed slopes must go through it.  Thus, considering a slope of 1, this forces a point to be located in the first row, second column.  Similarly, for a slope of $-1$, a point is forced to be in the third row, second column.  Finally, for a slope of 1/2, there must be a point in the first row, third column.  Therefore, three of the four points are forced in this configuration.  To find the last point, either the phantom point is an actual point which is a contradiction, or two points are needed, one in the first column and one in the second row which is also a contradiction since there are only 4 points.

{\centering
    \begin{figure}[h!]
        \centering
        \includegraphics[width=1.25in]{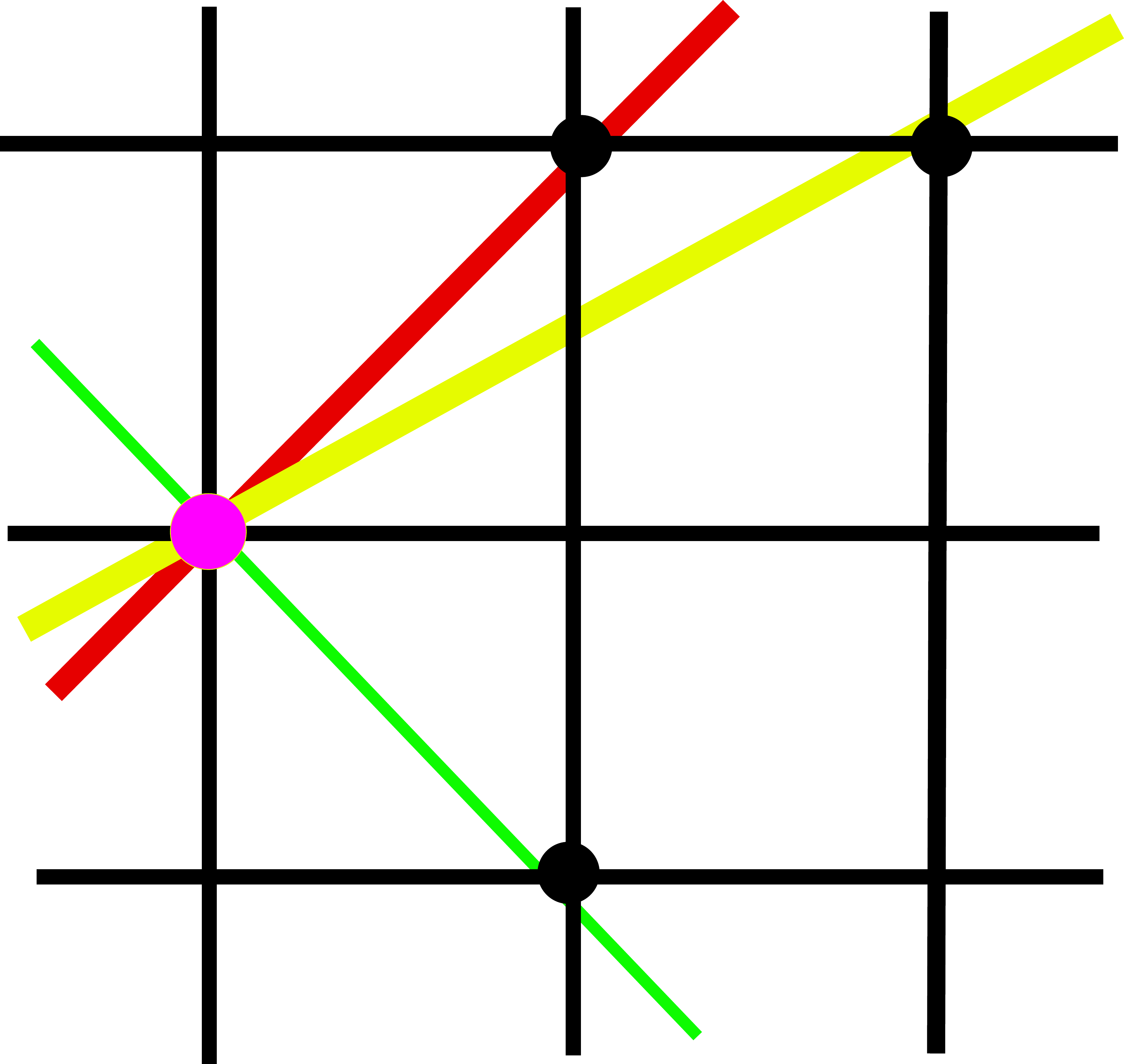}
        \caption{The pink circle is a phantom point and forces the location of 3 points.}
        \label{Phantom_Point}
    \end{figure}}

We utilize our definition of phantom point for the following proof.  Observe that we do not impose any conditions on the spacing of the points, more specifically the points are no longer expected to be separated horizontally or vertically by a unit length.  The next theorem shows that $n+1$ slopes are enough to reveal any configuration with $n$ points.  However, it does not provide an algorithm to do so.

\begin{theorem}\label{general}
If there are $n$ points, then the minimum number of slopes needed in general is $s \leq n+1$.
\end{theorem}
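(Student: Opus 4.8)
The plan is to argue by contradiction using the notion of a phantom point together with a simple counting argument. Fix an arbitrary configuration of $n$ points $p_1,\dots,p_n$ and choose any $n+1$ distinct slopes $m_1,\dots,m_{n+1}$; recall that for slope $m_j$ the given data consists of exactly the lines of slope $m_j$ that pass through at least one of the $p_i$. I would first dispose of the easy inclusion: every actual point $p_i$ is automatically a potential point, since for each $j$ the line of slope $m_j$ through $p_i$ is one of the given lines. So it suffices to prove the reverse inclusion, namely that once all $n+1$ slopes have been used there is no phantom point; this forces the set of potential points to coincide with $\{p_1,\dots,p_n\}$, so the configuration is located and $s\le n+1$.

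The core step is to rule out a phantom point $q$. By the definition of potential point, for each $j\in\{1,\dots,n+1\}$ there is a given line $\ell_j$ of slope $m_j$ passing through $q$, and being a given line $\ell_j$ must pass through at least one actual point; fix such a point and call it $a_j$. I claim $a_1,\dots,a_{n+1}$ are pairwise distinct. Indeed, if $a_j=a_{j'}$ for some $j\ne j'$, then both $\ell_j$ and $\ell_{j'}$ pass through $a_j$ and through $q$; since $\ell_j$ and $\ell_{j'}$ have distinct slopes they meet in at most one point, so $a_j=q$, which is impossible because $a_j$ is an actual point while the phantom point $q$ is not. Hence $a_1,\dots,a_{n+1}$ are $n+1$ distinct actual points, contradicting the hypothesis that the configuration has only $n$ points. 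Therefore no phantom point exists, which is exactly what was needed.

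The main obstacle is not any computation but getting the bookkeeping of the definitions precise: what the ``given lines'' are for a slope, and the exact meaning of potential point and phantom point, so that the pigeonhole step is clean. The single delicate point is the assertion that two of the forced lines through $q$ cannot be carried by the same actual point; this rests only on the elementary fact that two lines of distinct slopes intersect in at most one point, and it is precisely what upgrades ``$n+1$ lines through $q$'' into ``$n+1$ distinct points.'' Finally, I would note for context that this argument certifies only that $n+1$ slopes suffice for every configuration (and in fact for every choice of $n+1$ distinct slopes); it yields no algorithm for reconstructing the points, which is why the separate, constructive treatment via the tiered system in Theorem \ref{tieredsystem} is still required.
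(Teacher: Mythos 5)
Your proposal is correct and follows essentially the same route as the paper: assume $n+1$ slopes leave a phantom point, observe that each of the $n+1$ lines through it must carry an actual point, and conclude by counting that this exceeds the $n$ available points. Your write-up is in fact slightly more careful than the paper's, since you explicitly justify that the $n+1$ actual points are pairwise distinct (two lines of distinct slopes meet in at most one point), a step the paper leaves implicit.
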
    

\begin{proof}
Suppose there are $n$ points and $s>n+1$.  This implies that $n+1$ slopes is not enough to reveal the configuration, thus there is a phantom point when consider $n+1$ slopes.  In particular, there is a potential point that has $n+1$ slopes that go through it and is indeed not a point in the configuration.  Now there must be an actual point on each line that intersects the phantom point.  Hence, there would be $n+1$ total points to do this, which is impossible since there are only $n$ points.  Therefore, there are no phantom points when $n+1$ slopes are considered, and the minimum number of slopes needed in general is $s \leq n+1$.

\end{proof}

Next, we will use the tiered system to show that there are configurations that require $n+1$ slopes when considering $n$ points.  This will show that our result is sharp.

\begin{theorem} \label{tieredsystem}
Consider slopes from the tiered system, and assume there are $n$ points. Then the minimum number of slopes needed in general is $s=n+1$.
\end{theorem}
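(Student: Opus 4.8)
The plan is to obtain $s = n+1$ by pairing the inequality $s \le n+1$, which comes from Theorem~\ref{general}, with a matching lower-bound construction. For the upper bound I would first record that the argument behind Theorem~\ref{general} is insensitive to which slopes are used: if a potential point $q$ is not an actual point but, for each of $n+1$ distinct slopes, lies on a line of that slope through some actual point, then — since two lines through $q$ of distinct slopes meet only at $q$, and $q$ is not actual — those $n+1$ lines carry $n+1$ \emph{distinct} actual points, which is impossible. Hence any $n+1$ distinct slopes reveal every $n$-point configuration, the first $n+1$ slopes of the tiered system included, so $s \le n+1$ under the tiered system as well.

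For the lower bound it suffices to exhibit, for each $n \ge 2$, a single configuration of $n$ points that the first $n$ slopes of the tiered system fail to reveal — equivalently, a configuration possessing a phantom point with respect to those $n$ slopes (the case $n = 1$ being Theorem~\ref{one}). Write $m_1,\dots,m_n$ for the slopes in use, with $m_1$ horizontal and $m_2$ vertical since these form Tier~1. I would place the intended phantom point at the origin, $q=(0,0)$, and for $1 \le i \le n$ place the point $p_i = i\,v_i$ on the line $\ell_i$ through $q$ of slope $m_i$, where $v_i$ is a fixed nonzero direction vector of slope $m_i$; the factor $i$ keeps $p_i \ne q$ and makes the $p_i$ pairwise distinct. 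It then remains to check the three defining properties of a phantom point: $q = \ell_1 \cap \ell_2$ is the intersection of the horizontal line through $p_1$ with the vertical line through $p_2$, both in the collection, so $q$ is a potential point; for every $j$ the line $\ell_j$ has slope $m_j$ and contains the actual point $p_j$, so no slope-cut removes $q$; and $q \ne p_i$ for all $i$. Thus $n$ tiered slopes leave a phantom point, so $s \ge n+1$, and with the upper bound $s = n+1$.

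The step I expect to be the genuine obstacle is conceptual rather than computational: making precise, and justifying, the equivalence ``a phantom point exists if and only if the configuration is not revealed,'' on which the lower bound rests — concretely, that the presence of $q$ pushes the number of potential points above $n$, so that the line data can no longer single out the $n$ actual points. This is the converse of the counting step used in Theorem~\ref{general}, and is best isolated as a short preliminary observation. Two smaller technical matters remain: ruling out accidental coincidences among the $p_i$ is in fact unnecessary, since any extra incidences only create more potential points, and the sole genuine requirements $p_i \ne q$ and $p_i \ne p_j$ are immediate from $p_i = i\,v_i$; and for $n = 2$ the construction reproduces the four-potential-point configuration already handled in Theorem~\ref{two}, which serves as a consistency check.
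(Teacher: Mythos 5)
Your proposal is correct and follows essentially the same route as the paper: the upper bound is quoted from Theorem~\ref{general}, and the lower bound comes from building a configuration with one point placed on each of the $n$ tiered-slope lines through a common non-configuration point, which is then a phantom point forcing an $(n+1)$st slope. Your placement at distance $i$ rather than unit distance, and your remark about making the ``phantom point $\Rightarrow$ not revealed'' step explicit, are only minor refinements of the paper's argument.
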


\begin{proof}
Consider the first $n$ slopes from the tiered system, denoted by $s_1,s_2,\ldots,s_n$.  Define the center of the configuration to be the point $z$, and suppose $z$ is not a point in the configuration.  Then place the first point in the configuration a unit distance away from $z$ on the line with slope $s_1$ that goes through $z$.  Next place the second point a unit distance away from $z$ along the line going through $z$ with slope $s_2$.  Continue this pattern until you placed all $n$ points.  Then the center of the configuration intersects all lines with the slopes $s_1,s_2,\ldots,s_n$, and is not in fact a point (all $n$ points have already been placed).  This shows that $z$ is a phantom point.  Hence, at least one more slope is needed to eliminate this point.  In particular, the number of slopes required in this example is at least $n+1$. From Theorem \ref{general}, we see that this implies that the minimum number of slopes required in general is $n+1$.
\end{proof}

We have successfully answered Conjecture \ref{conj} for all $n$ using an algorithm with a tiered system of slopes.  It is natural to consider the higher dimensional version of this problem.  How many slopes would be required according to the dimension?  In particular, the case when we are in 3-dimensions?


\end{document}